\newtheorem{theorem}{Theorem}
\newtheorem{proposition}[theorem]{Proposition}
\newtheorem{corollary}[theorem]{Corollary}
\newtheorem{lemma}[theorem]{Lemma}
\newenvironment{proof}
{\begin{trivlist}\item[]{{\sc Proof.}}}{\hfill{$\square$}\noindent\end{trivlist}}
\def\llceil{\lceil\kern-3.2pt\lceil}
\def\rrceil{\rceil\kern-3.2pt\rceil}
\def\llfloor{\lfloor\kern-3.2pt\lfloor}
\def\rrfloor{\rfloor\kern-3.2pt\rfloor}
\def\leftllceil{\left\lceil\kern-3.2pt\left\lceil}
\def\rightrrceil{\right\rceil\kern-3.2pt\right\rceil}
\def\leftllfloor{\left\lfloor\kern-3.2pt\left\lfloor}
\def\rightrrfloor{\right\rfloor\kern-3.2pt\right\rfloor}
\def\bigllfloor{\bigg\lfloor\kern-3.2pt\bigg\lfloor}
\def\bigrrfloor{\bigg\rfloor\kern-3.2pt\bigg\rfloor}
\begin{document}


\title{No projective $16$-divisible binary linear code of length $131$ exists}

\author{Sascha Kurz, University of Bayreuth
\thanks{S.~Kurz is with the Department of Mathematics, Physics, and 
Computer Science, University of Bayreuth, Bayreuth, GERMANY. email: sascha.kurz@uni-bayreuth.de}}

%



\maketitle

\begin{abstract}
  We show that no projective $16$-divisible binary linear code of length $131$ exists. This implies several improved upper bounds for 
  constant-dimension codes, used in random linear network coding, and partial spreads.
\end{abstract}

\begin{IEEEkeywords}
divisible codes, projective codes, partial spreads, constant-dimension codes.
\end{IEEEkeywords}

%
\IEEEpeerreviewmaketitle

\maketitle

\section{Introduction}
\label{sec_intro}
\IEEEPARstart{A}{n} $[n,k,d]_q$ code is a $q$-ary linear code with length $n$, dimension $k$, and minimum Hamming distance $d$. Since we will 
only consider binary codes, we also speak of $[n,k,d]$ codes. Linear codes have numerous applications so that constructions or non-existence 
results for specific parameters were the topic of many papers. One motivation was the determination of the smallest integer $n(k,d)$ for which 
an $[n,k,d]$ code exists. As shown in \cite{baumert1973note} for every fixed dimension $k$ there exists an integer $D(k)$ such that $n(k,d)=g(k,d)$ 
for all $d\ge D(k)$, where     
  $n(k,d)\ge g(k,d):=\sum_{i=0}^{k-1} \left\lceil\frac{d}{2^i}\right\rceil$,
is the so-called Griesmer bound. Thus, the determination of $n(k,d)$ is a finite problem. In 2000 the determination of $n(8,d)$ was completed in 
\cite{bouyukhev2000smallest}. Not many of the open cases for $n(9,d)$ have been resolved since then and we only refer to most recent paper \cite{46_9_20}.

The aim of this note is to to circularize a recent application of non-existence results of linear codes. In random linear network coding so-called 
constant-dimension codes are used. These are sets of $k$-dimensional subspaces of $\mathbb{F}_q^n$ with subspace distance $d_S(U,W):=\dim(U)+\dim(W)-
2\dim(U\cap W)$. By $A_q(n,d;k)$ we denote the maximum possible cardinality, where $A_q(n,d;k)=A_q(n,d;n-k)$, so that we assume $2k\le n$. In 
\cite{kiermaier2020lengths} the upper bounds 
  $A_q(n,d;k)
	  \le \leftllfloor\frac{\left(q^n-1\right)\cdot A_q(n-1,d;k-1)/(q-1)}{\left(q^k-1\right)/(q-1)}\rightrrfloor_{q^{k-1}}$  
for $d>2k$ and $A_q(n,2k;k)\le \leftllfloor\frac{\left(q^n-1\right)/(q-1)}{\left(q^k-1\right)/(q-1)}\rightrrfloor_{q^{k-1}}$ were proven. 
Here $\leftllfloor a/b \rightrrfloor_{q^{r}}$ denotes the maximal integer $t$ such that there exists a 
$q^r$-divisible $q$-ary linear code of effective length $n=a-tb$ and a code is called $q^r$-divisible if the Hamming weights $\operatorname{wt}(c)$ of 
all codewords $c$ are divisible by $q^r$. For integers $r$ the possible length of $q^r$-divisible codes have been completely determined in \cite{kiermaier2020lengths} and except 
for the cases $(n,d,k,q)=(6,4,3,2)$ and $(8,4,3,2)$ no tighter bound for $A_q(n,d;k)$ with $d>2k$ is known. For the case $d=2k$, where the constant-dimension codes 
are also called partial spreads, the notion of 
$\leftllfloor a/b \rightrrfloor_{q^{r}}$ can be sharpened by requiring the existence of a projective $q^r$-divisible $q$-ary linear code of 
effective length $n=a-tb$. Doing so, all known upper bounds for $A_q(n,2k;k)$ follow from non-existence results of projective $q^r$-divisible 
codes, see e.g.\ \cite{honold2018partial}. For each field size $q$ and each integer $r$ there exists only a finite set $\mathcal{E}_q(r)$ 
such that there does not exist a projective $q^r$-divisible code of effective length $n$ iff $n\in \mathcal{E}_q(r)$. We have $\mathcal{E}_2(1)=\{1,2\}$, 
$\mathcal{E}_2(2)=\{1,2,3,4,5,6,9,10,11,12,13\}$, and remark that the determination of $\mathcal{E}_2(3)$ was recently completed in \cite{honold2019lengths} 
by excluding length $n=59$.

In this paper we show the non-existence of $16$-divisible binary codes of effective length $n=131$, which e.g.\ implies $A_2(13,10;5)\le 259$. 

\section{Preliminaries}
\label{sec_preliminaries}
Since the minimum Hamming distance is not relevant in our context, we speak of $[n,k]$ codes. The dual code of an $[n,k]$ code $C$ is the $[n,n-k]$ code  
$C^*$ consisting of the elements of $\mathbb{F}_2^n$ that are perpendicular to all codewords of $C$. By $a_i$ we denote the number of codewords 
of $C$ of weight $i$. With this, the weight enumerator is given by $W(z)=\sum_{i\ge 0} a_iz^i$. The numbers $a_i^*$ of codewords of 
the dual code of weight $i$ are related by the so-called MacWilliams identities
\begin{equation}
  \label{eq_macwilliams}
    \sum_{i\ge 0} a_i^* z^i=\frac{1}{2^k}\cdot \sum_{i\ge 0} a_i(1+z)^{n-i}(1-z)^i.
\end{equation} 
Clearly we have $a_0=a_0^*=1$. 
In this paper we assume that all lengths are equal to the so-called effective length, i.e., $a_1^*=0$. A linear code is called 
projective if $a_2^*=0$. Let $C$ be a projective $[n,k]$ code. By comparing the coefficients of $z^0$, $z^1$, $z^2$, and $z^3$ 
on both sides of Equation~\ref{eq_macwilliams} we obtain:
\begin{eqnarray}
  \sum_{i>0} a_i &=& 2^k-1,\label{mw0}\\
  \sum_{i\ge 0} ia_i &=& 2^{k-1}n,\label{mw1}\\
  \sum_{i\ge 0} i^2a_i &=& 2^{k-1}\cdot n(n+1)/2,\label{mw2}\\ 
  \sum_{i\ge 0} i^3a_i &=& 2^{k-2}\cdot \left(\frac{n^2(n+3)}{2}-3a_3^*\right)\label{mw3}
\end{eqnarray} 

The weight enumerator of a linear $[n,k]$ code $C$ can be refined to a so-called partition weight enumerator, see e.g.\ \cite{simonis1995macwilliams}. 
To this end let $r\ge 1$ be an integer and $\cup_{j=1}^r P_j$ be a partition of the coordinates $\{1,\dots,n\}$. By 
$I=\left(i_1,\dots,i_r\right)$ we denote a multi-index, where $0\le i_j\le p_j$ and $p_j=\#P_j$ for all $1\le j\le r$. With this, 
$a_I\in\mathbb{N}$ denotes the number of codewords $c$ such that $\#\left\{h\in P_j\,:\, c_h\neq 0\right\}=i_j$ for all $1\le j\le r$, which generalizes 
the notion of the counts $a_i$. By $a_I^*\in\mathbb{N}$ we denote the corresponding counts for the dual code $C^*$ of $C$. The generalized 
relation between the $a_I^*$ and the $a_I$ is given by: 
\begin{eqnarray}
   \!\!\!\!\!\!\!&& \sum_{I=\left(i_1,\dots,i_r\right)} a_I^* \prod_{j=1^r} z_j^{i_j} \notag\\ 
    \!\!\!\!\!\!\!&=&\!\!\frac{1}{2^k}\cdot \!\!\!\!\sum_{I=\left(i_1,\dots,i_r\right)}\!\!\!\!\!\! a_I\prod_{j=1}^r \left(1+z_j\right)^{n-i_j}\left(1-z_j\right)^{i_j}\label{eq_macwilliams_partition}
\end{eqnarray}  

The support $\operatorname{supp}(c)$ of a codeword $c\in\mathbb{F}_2^n$ is the set of coordinates 
$\left\{1\le i\le n\,:\, c_i\neq 0\right\}$. The residual of a linear code $C$ with respect of a codeword $c\in C$ 
is the restriction of the codewords of $C$ to those coordinates that are not in the support of $c$, i.e., the 
resulting effective length is given by $n-\operatorname{wt}(c)$. If $c$ is a codeword of a $q^r$-divisible 
$q$-ary code $C$, where $r\ge 1$, then the residual code with respect to $c$ is $q^{r-1}$-divisible, see e.g.\ 
\cite{honold2018partial}. The partition weight enumerator with respect to a codeword $c$ is given by 
Equation~(\ref{eq_macwilliams_partition}), where we choose $r=2$, $P_2=\operatorname{supp}(c)$, and 
$P_1=\{1,\dots,n\}\backslash P_2$, so that restricting to the coordinates in $P_1$ gives the residual code.       
                     
\section{No projective $16$-divisible binary linear code of length $131$ exists}   
\label{sec_main_result}

Assume that $C$ is a projective $16$-divisible $[131,k]$ code. Since for every codeword $c\in C$ the residual code is 
$8$-divisible and projective, we conclude from $\{3,19,35\}\subseteq \mathcal{E}_2(3)$, see e.g.\ \cite{honold2019lengths}, 
that the possible non-zero weights of the codewords in $C$ are contained in $\{16,32,48,64,80\}$. For codewords 
of weight $80$ the weight enumerator of the corresponding residual code can be uniquely determined:
\begin{lemma}
  \label{lemma_8_div_n_51}(\cite[Lemma 24]{honold2018partial})\\
  The weight enumerator of a projective $8$-divisible binary linear code of (effective) length $n=51$ is given by 
  $W(z)=1+204z^{24}+51z^{32}$, i.e., it is an $8$-dimensional two-weight code.
\end{lemma}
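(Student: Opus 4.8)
The plan is to combine the MacWilliams identities~(\ref{mw0})--(\ref{mw3}) for projective codes with the residual-code technique, and then to close the remaining gap by a short direct argument. So let $C$ be a projective $8$-divisible $[51,k]$ binary code and write $A_w$ for the number of its codewords of weight $w$. Since $C$ is projective of effective length $51$, a generator matrix of $C$ has $51$ pairwise distinct nonzero columns, so $2^k-1\ge 51$, i.e.\ $k\ge 6$, and the nonzero weights of $C$ lie in $\{8,16,24,32,40,48\}$. First I would shrink this weight set. For any codeword $c$, a generator matrix of the residual code of $C$ with respect to $c$ is obtained by deleting the columns indexed by $\operatorname{supp}(c)$ from a generator matrix of $C$; the remaining columns are still pairwise distinct and nonzero, so the residual is again projective, of effective length exactly $51-\operatorname{wt}(c)$, and it is $4$-divisible since $C$ is $8$-divisible. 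As $11,3\in\mathcal{E}_2(2)=\{1,2,3,4,5,6,9,10,11,12,13\}$, no projective $4$-divisible binary code of effective length $11$ or $3$ exists, so $C$ has no codeword of weight $40$ or $48$; hence the nonzero weights of $C$ lie in $\{8,16,24,32\}$.

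Next I would squeeze everything out of~(\ref{mw0})--(\ref{mw3}). Equations~(\ref{mw0})--(\ref{mw2}) become three linear equations in $A_8,A_{16},A_{24},A_{32}$; eliminating $A_8$ produces, in particular, $2A_{16}+6A_{24}+12A_{32}=459\cdot 2^{k-6}$, whose left-hand side is even while the right-hand side is odd for $k=6$, ruling out $k=6$. For $k\ge 7$ put $m:=2^{k-7}$; solving~(\ref{mw0})--(\ref{mw2}) gives the one-parameter family $A_8=27m-3-t$, $A_{16}=3t-78m+3$, $A_{24}=179m-1-3t$ with $t:=A_{32}$. Non-negativity of $A_8$ and $A_{16}$ forces $26m-1\le t\le 27m-3$, which is impossible for $m=1$, so $k\ge 8$. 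Substituting the parametrisation into~(\ref{mw3}) yields $a_3^*=817-32(t-1)/m$; combined with $a_3^*\ge 0$ and $t\ge 26m-1$ this gives $15m\le 64$, hence $m\le 4$ and $k\le 9$.

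It then remains to handle $k\in\{8,9\}$. For $k=8$ (so $m=2$) the inequalities $26m-1\le t\le 27m-3$ collapse to $t=51$, giving $A_8=A_{16}=0$, $A_{24}=204$, $A_{32}=51$; this is exactly $W(z)=1+204z^{24}+51z^{32}$, and $\log_2(1+204+51)=8$, so we obtain the asserted $8$-dimensional two-weight code. For $k=9$ (so $m=4$) those inequalities together with $a_3^*\ge 0$ force $t=103$, i.e.\ $A_8=2$, $A_{16}=0$, $A_{24}=406$, $A_{32}=103$, $a_3^*=1$, a distribution consistent with all of the above. To exclude it I would argue directly with the two weight-$8$ codewords $c_1\ne c_2$: the codeword $c_1+c_2$ is nonzero of weight $16-2|\operatorname{supp}(c_1)\cap\operatorname{supp}(c_2)|\le 16$, hence of weight $8$ or $16$; weight $8$ gives a third weight-$8$ codeword, contradicting $A_8=2$, and weight $16$ contradicts $A_{16}=0$. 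Thus $k=9$ cannot occur, and the claim follows.

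I expect the crux to be precisely this last step: the MacWilliams identities together with $a_i\ge 0$ and $a_3^*\ge 0$ do \emph{not} by themselves rule out the spurious $9$-dimensional weight distribution, so one needs an extra structural input, here the behaviour of sums of minimum-weight codewords; alternatively a fourth power moment of~(\ref{eq_macwilliams}), or a partition-weight-enumerator argument as in~(\ref{eq_macwilliams_partition}), would do. Everything else is routine linear algebra and divisibility bookkeeping.
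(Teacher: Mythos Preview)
The paper does not supply a proof of this lemma; it is quoted verbatim from \cite[Lemma~24]{honold2018partial}, so there is nothing to compare against. Your argument is correct and self-contained: the residual-code step to kill weights $40$ and $48$ via $\mathcal{E}_2(2)$, the linear algebra on (\ref{mw0})--(\ref{mw3}) pinning the dimension to $k\in\{8,9\}$, and the final exclusion of the spurious $k=9$ weight distribution by observing that $c_1+c_2$ would have to be a new codeword of weight $8$ or $16$ all go through exactly as stated.
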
 

\begin{lemma}
  \label{lemma_general_information}
  Each projective $16$-divisible $[131,k]$ code satisfies
  \begin{eqnarray*}
  a_{48} &=& -6 a_{16} - 3 a_{32} - 10 + 11\cdot 2^{k-9},\\
  a_{64} &=& 8 a_{16} + 3 a_{32} + 15 +221\cdot 2^{k-8},\\
  a_{80} &=& -3 a_{16} - a_{32} - 6 + 59\cdot 2^{k-9},\\
  a_3^* &=& 2^{17-k} a_{16}+2^{15-k}a_{32} -311+5\cdot 2^{16-k},
  \end{eqnarray*}
  $k\ge 9$, and $a_{80}\ge 4+3\cdot 2^{k-5}\ge 52$.
\end{lemma}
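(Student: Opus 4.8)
The plan is to read off the six unknowns $a_{16},a_{32},a_{48},a_{64},a_{80},a_3^*$ from the MacWilliams relations (\ref{mw0})--(\ref{mw3}). As established above, the only possible non-zero weights of $C$ are $16,32,48,64,80$, so $a_i=0$ for every other $i>0$, and, with $n=131$ fixed, (\ref{mw0})--(\ref{mw3}) become four honest linear equations in exactly these six quantities, all coefficients being explicit integers.

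First I would take the first three of these, divide them by $16^0$, $16^1$, $16^2$ respectively, so that they read $\sum_{m=1}^{5} a_{16m}=2^k-1$, $\sum_{m=1}^{5} m\,a_{16m}=131\cdot 2^{k-5}$, and $\sum_{m=1}^{5} m^2 a_{16m}=4323\cdot 2^{k-8}$. These are three equations in the five unknowns $a_{16},\dots,a_{80}$, so ordinary Gaussian elimination expresses $a_{48},a_{64},a_{80}$ as affine functions of $a_{16}$, $a_{32}$ and $2^k$; writing $2^k=512\cdot 2^{k-9}$ throughout and simplifying the powers of two gives precisely the three displayed formulas. Substituting these into (\ref{mw3}) — which after the same division by $16^3$ reads $\sum_{m=1}^{5} m^3 a_{16m}=2^{k-14}\bigl(131^2\cdot 134/2-3a_3^*\bigr)$ — makes its left-hand side a known affine function of $a_{16},a_{32},2^k$, and solving this one equation for $a_3^*$ yields the fourth formula.

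It remains to derive $k\ge 9$ and the bound on $a_{80}$. Since $a_{48}$ counts codewords it is a non-negative integer, so the identity $11\cdot 2^{k-9}=a_{48}+6a_{16}+3a_{32}+10$ shows that $11\cdot 2^{k-9}$ is a positive integer; as $11$ is odd this forces $2^{k-9}\in\mathbb{Z}$, i.e.\ $k\ge 9$. For the final claim, $a_{48}\ge 0$ gives $6a_{16}+3a_{32}\le 11\cdot 2^{k-9}-10$, and since also $a_{16},a_{32}\ge 0$ we get $3a_{16}+a_{32}\le 6a_{16}+3a_{32}\le 11\cdot 2^{k-9}-10$; inserting this into $a_{80}=-3a_{16}-a_{32}-6+59\cdot 2^{k-9}$ yields $a_{80}\ge 48\cdot 2^{k-9}+4=4+3\cdot 2^{k-5}$, which is at least $4+3\cdot 16=52$ because $k\ge 9$.

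No genuine obstacle is expected: the whole argument is linear algebra over $\mathbb{Q}$ together with a little $2$-adic bookkeeping needed to keep the powers of two in the normalised shape $2^{k-9},2^{k-8},\dots$ while eliminating. The only spot calling for a bit of care is the last inequality, where one must notice that $a_{48}\ge 0$ combined with mere non-negativity of $a_{16}$ and $a_{32}$ already produces the exact slope $3$ in $a_{80}\ge 4+3\cdot 2^{k-5}$ (rather than a weaker constant such as $2^{k-9}$).
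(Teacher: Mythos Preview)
Your proposal is correct and follows essentially the same approach as the paper: solve the four MacWilliams equations (\ref{mw0})--(\ref{mw3}) for $a_{48},a_{64},a_{80},a_3^*$ in terms of $a_{16},a_{32}$, read off $k\ge 9$ from integrality of $a_{48}$, and obtain the lower bound on $a_{80}$ from $a_{48}\ge 0$ via $3a_{16}+a_{32}\le 6a_{16}+3a_{32}$. The paper's proof is identical in substance, only more terse.
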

\begin{proof}
  Solving the constraints (\ref{mw0})-(\ref{mw3}) for $a_{48}$, $a_{64}$, $a_{80}$, and $a_3^*$ gives 
  the stated equations for general dimension $k$. Since $a_{48}\in\mathbb{N}$ (or $a_{80}\in\mathbb{N}$) 
  we have $k\ge 9$. Since $a_{48}\ge 0$, we have $6 a_{16} + 3 a_{32}\le 11\cdot 2^{k-9}-10$, so that 
  $a_{80} = -3 a_{16} - a_{32} - 6 + 59\cdot 2^{k-9} \ge 4+3\cdot 2^{k-5}\ge 52$. 
\end{proof}

First we exclude the case of dimension $k=9$:
\begin{lemma}
  \label{lemma_exclude_k_9}
  No projective $16$-divisible $[131,9]$ code exists.
\end{lemma}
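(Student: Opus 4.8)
The plan is to first pin down the weight distribution of $C$ when $k=9$, then analyse how the large‑weight codewords meet the support of the unique weight‑$48$ codeword, and finally reach a contradiction from a second‑moment count. Specializing Lemma~\ref{lemma_general_information} to $k=9$, the relations read $a_{48}=-6a_{16}-3a_{32}+1$, $a_{64}=8a_{16}+3a_{32}+457$, $a_{80}=-3a_{16}-a_{32}+53$. Nonnegativity of $a_{48}$ together with $a_{16},a_{32}\in\mathbb{N}$ forces $a_{16}=a_{32}=0$, hence $a_{48}=1$, $a_{64}=457$, $a_{80}=53$; in particular there is a unique codeword $c$ of weight $48$, and I set $S=\operatorname{supp}(c)$, so $|S|=48$.

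The heart of the argument is to show that every codeword $c'$ of weight $64$ or $80$ satisfies $m(c'):=|\operatorname{supp}(c')\cap S|=24$. For any such $c'$ the vector $c+c'$ is a nonzero codeword, so $\operatorname{wt}(c+c')=\operatorname{wt}(c')+48-2m(c')$ must lie in $\{16,32,48,64,80\}$; since $a_{16}=a_{32}=0$ the weights $16,32$ are impossible, and weight $48$ only occurs for $c'=0$. For a weight‑$80$ codeword $c'$ I would use in addition that the residual $C_{c'}$ is a projective $8$‑divisible code of effective length $51$, which by Lemma~\ref{lemma_8_div_n_51} has nonzero weights $24$ and $32$ only; since $\operatorname{supp}(c)\not\subseteq\operatorname{supp}(c')$ (otherwise $\operatorname{wt}(c+c')=32$), the codeword $c$ restricts to a nonzero codeword of $C_{c'}$, of weight $48-m(c')\in\{24,32\}$, so $m(c')\in\{16,24\}$; and $m(c')=16$ would give $\operatorname{wt}(c+c')=96$, which is not an admissible weight. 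Hence $m(c')=24$ for every weight‑$80$ codeword. For a weight‑$64$ codeword $c'$ the weight restrictions leave $m(c')\in\{16,24\}$, and $m(c')=16$ would make $c+c'$ a weight‑$80$ codeword whose intersection with $S$ has size $|S\setminus\operatorname{supp}(c')|=48-16=32\neq 24$, contradicting what was just proved; so $m(c')=24$ here as well.

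To finish, I would double‑count $\sum_{c'\in C}m(c')^2$. Since $C$ is projective and of effective length $131$, each coordinate in $S$ is nonzero in exactly $2^{8}$ codewords and each pair of distinct coordinates in $S$ is simultaneously nonzero in exactly $2^{7}$ codewords, so $\sum_{c'}m(c')=48\cdot 2^{8}$ and $\sum_{c'}m(c')(m(c')-1)=48\cdot 47\cdot 2^{7}$, giving $\sum_{c'}m(c')^2=301056$. On the other hand, by the previous paragraph $\sum_{c'}m(c')^2=48^2+(457+53)\cdot 24^2=296064$. These two values differ, which is the desired contradiction, so no projective $16$‑divisible $[131,9]$ code exists.

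The main obstacle is the middle step: forcing $m(c')$ to the single value $24$. The weight‑$80$ case hinges on the exact two‑weight shape of the $[51,8]$ residual (Lemma~\ref{lemma_8_div_n_51}), together with the small observations that this residual is again projective and $8$‑divisible and that $c$ does not vanish under restriction; the weight‑$64$ case cannot be treated through its own residual (no analogue of Lemma~\ref{lemma_8_div_n_51} for effective length $67$ is available) and must instead be bootstrapped from the weight‑$80$ conclusion. Once $m(c')\equiv 24$ on the large‑weight codewords, the concluding moment count is routine arithmetic.
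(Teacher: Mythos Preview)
Your proof is correct, but it takes a considerably longer route than the paper. Your first step (pinning down the weight distribution) and your weight-$80$ analysis coincide with the paper's key step: you both show, via the residual at a weight-$80$ codeword and Lemma~\ref{lemma_8_div_n_51}, that $m(c')=24$ for every weight-$80$ codeword $c'$, i.e.\ that $c+c'$ again has weight $80$. At this point the paper finishes in one line: the map $c'\mapsto c+c'$ is then a fixed-point-free involution on the set of weight-$80$ codewords, forcing $a_{80}$ to be even, whereas $a_{80}=53$ is odd.

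You instead go on to bootstrap the weight-$64$ case (nicely, by noting that $m(c')=16$ would produce a weight-$80$ codeword with $m=32$) and close with a second-moment count over $S$, using projectivity to evaluate $\sum m(c')$ and $\sum m(c')(m(c')-1)$. This is a genuinely different and self-contained finish; it avoids the parity trick but costs extra computation. The parity argument is clearly the more economical endgame here, though your moment approach has the advantage of being a more general template that does not rely on $a_{80}$ happening to be odd.
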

\begin{proof}
  For $k=9$ the equations of Lemma~\ref{lemma_general_information} yield
  \begin{eqnarray*}
    a_{48} &=& -6 a_{16} - 3 a_{32} + 1,\\ 
    a_{64} &=& 8 a_{16} + 3 a_{32} + 457,\\ 
    a_{80} &=& -3 a_{16} - a_{32} + 53, \text{ and}\\ 
    a_3^*  &=& 256 a_{16} + 64 a_{32} + 329
  \end{eqnarray*}
  for a projective $16$-divisible $[131,9]$ code $C$. 
  Since $a_{48}\ge 0$ and $a_{16},a_{32}\in \mathbb{N}$, we have $a_{16}=a_{32}=0$, so that $a_{48}=1$, 
  $a_{64}=457$, $a_{80}=53$, and $a_3^*=329$. Now consider a codeword $c_{80}\in C$ of weight $80$ and the unique 
  codeword $c_{48}\in C$ of weight $48$. In the residual code of $c_{80}$ the restriction of $c_{48}$ has weight $24$ or $32$ 
  due to Lemma~\ref{lemma_8_div_n_51}. In the latter case the codeword $c_{80}+c_{48}\in C$ has weight $96$, which cannot 
  occur in a projective $16$-divisible binary linear code of length $131$. Thus, we have that $c_{80}+c_{48}\in C$ gives another 
  codeword of weight $80$. However, since $a_{80}$ is odd, this yields a contradiction and the code $C$ does not exist.  
\end{proof}  

\begin{lemma}
  \label{lemma_weight_16_or_32}
  A projective $16$-divisible binary linear code $C$ of length $131$ does not contain a codeword of weight $16$ or $32$. 
\end{lemma}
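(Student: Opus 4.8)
The plan is to assume $C$ contains a codeword $c$ of weight $w\in\{16,32\}$ and derive a contradiction by confronting $c$ with the large supply of weight-$80$ codewords. Write $P_2=\operatorname{supp}(c)$. By Lemma~\ref{lemma_general_information} we have $k\ge 9$ and $a_{80}\ge 4+3\cdot 2^{k-5}\ge 52$, so in particular there is a weight-$80$ codeword $c_{80}$; its residual code is projective, $8$-divisible, and of effective length $131-80=51$, hence by Lemma~\ref{lemma_8_div_n_51} has weight enumerator $1+204z^{24}+51z^{32}$.

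First I would prove that every weight-$80$ codeword $c_{80}$ satisfies $P_2\subseteq\operatorname{supp}(c_{80})$. Put $x=\#\big(\operatorname{supp}(c)\cap\operatorname{supp}(c_{80})\big)$. The image of $c$ in the residual code of $c_{80}$ is a codeword of weight $w-x$, so $w-x\in\{0,24,32\}$; since $0\le x\le w\le 32$ this already forces $x=w$ when $w=16$. When $w=32$ it leaves $x\in\{0,8,32\}$, and then I would invoke that $\operatorname{wt}(c+c_{80})=w+80-2x$ must be one of the admissible nonzero weights $\{16,32,48,64,80\}$ of $C$: this excludes $x\in\{0,8\}$ (which give weights $112$ and $96$), leaving $x=w=32$. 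Thus in both cases every weight-$80$ codeword restricts to the all-ones word on $P_2$.

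Next I would count. Let $m=\dim\big(C|_{P_2}\big)$, the dimension of the restriction of $C$ to the coordinates in $P_2$. The codewords of $C$ restricting to the all-ones word on $P_2$ form a single coset of $\{c'\in C:\ c'|_{P_2}=0\}$, a subcode of dimension $k-m$; hence $a_{80}\le 2^{k-m}$. Together with $a_{80}\ge 4+3\cdot 2^{k-5}>3\cdot 2^{k-5}$ this gives $2^{k-m}>3\cdot 2^{k-5}$, i.e.\ $m\le 3$. On the other hand, $P_2$ is a set of $w$ coordinates of the projective code $C$, so the corresponding $w$ columns of a generator matrix of $C$ are pairwise distinct and nonzero and span $C|_{P_2}$; thus $2^m-1\ge w$, which forces $m\ge 5$ if $w=16$ and $m\ge 6$ if $w=32$ --- contradicting $m\le 3$.

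The step I expect to be the main obstacle is the first one, pinning down $x=w$ exactly; it hinges on the precise two-weight shape of the length-$51$ residual supplied by Lemma~\ref{lemma_8_div_n_51} (so that, with $0\le x\le w\le 32$, the value $w-x=0$ is the only option when $w=16$), on bringing in the weight spectrum $\{16,32,48,64,80\}$ of $C$ for the case $w=32$, and on the fact (already used in this section) that the residual of a projective code with respect to a codeword is again projective of full effective length, so that Lemma~\ref{lemma_8_div_n_51} genuinely applies. Everything afterwards is the elementary coset count $a_{80}\le 2^{k-m}$ and the pigeonhole bound $2^m-1\ge w$ on distinct nonzero columns.
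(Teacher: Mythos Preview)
Your proof is correct and follows essentially the same two-step strategy as the paper: first show that $\operatorname{supp}(c)\subseteq\operatorname{supp}(c_{80})$ for every weight-$80$ codeword, then derive a numerical contradiction from projectivity and the lower bound on $a_{80}$.

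The packaging of the second step differs slightly. The paper takes the subcode $D\subseteq C$ generated by all weight-$80$ codewords: since the $\ge 16$ columns indexed by $M=\bigcap_{l}\operatorname{supp}(l)$ coincide in any generator matrix of $D$, and each added row can at most halve the maximum column multiplicity, one needs $k-\dim D\ge 4$, whence $a_{80}\le 2^{\dim D}-1\le 2^{k-4}-1$, contradicting $a_{80}\ge 4+3\cdot 2^{k-5}$. Your version instead looks at the quotient $C|_{P_2}$: projectivity gives $\dim(C|_{P_2})\ge 5$ directly from $|P_2|\ge 16$ distinct nonzero columns, and all weight-$80$ codewords lie in a single coset of the kernel, so $a_{80}\le 2^{k-5}$. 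These are dual formulations of the same idea; yours is arguably a little cleaner since it avoids the column-multiplicity halving argument. The first step (pinning down $x=w$) is handled equivalently in both proofs---the paper compresses your case analysis into the single observation that $\operatorname{wt}(c+c_{80})\le 80$ forces $w-x\le w/2$, hence $w-x=0$.
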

\begin{proof}
  Let $c\in C$ be an arbitrary codeword of weight $80$ (which indeed exists, see Lemma~\ref{lemma_general_information}) and 
  $c'\in C$ a codeword of weight $16$ or $32$. We consider the  
  residual code $C'$ of $C$ with respect to the codeword $c$. From Lemma~\ref{lemma_8_div_n_51} we conclude that the restriction 
  $\tilde{c}'$ of $c'$ in $C'$ has weight $0$, $24$, or $32$. Since $c+c'\in C$ has a weight of at most $80$, $\tilde{c}'$ is the 
  zero codeword of weight $0$. In other words, we have $\operatorname{supp}(c')\subseteq\operatorname{supp}(c)$. If $L$ denotes 
  the set of codewords of weight $80$ in $C$, then 
  $
    \operatorname{supp}(c') \subseteq \cap_{l\in L} \operatorname{supp}(l)=:M
  $, 
  with $M\subseteq\{1,\dots 131\}$ and $\#M\ge 16$.
  
  Now let $D$ be the code generated by the elements in $M$, i.e., the codewords of weight $80$. By $k'$ we denote the dimension of 
  $D$ and by $k$ the dimension of $C$. 
  Since $D$ contains all codewords of weight $80$ and due to Lemma~\ref{lemma_general_information} we have
  \begin{equation}
    \label{ie_lb_ub}
    4+3\cdot 2^{k-5}\le a_{80}\le 2^{k'}-1 
  \end{equation}
  for $C$. Since $\#M\ge 16$ each generator matrix $G$ of $D$ contains a column that occurs at least $16$ times, i.e., the maximum column 
  multiplicity is at least $16$. If a row is appended to $G$ then the maximum column multiplicity can go down by a factor of at most 
  the field size $q$, i.e., $2$ in our situation. Thus, we have $k'\le k-4$. Since Inequality~\ref{ie_lb_ub}) gives  
  $$
    4+3\cdot 2^{k-5} \le 2^{k'}-1\le 2^{k-4}-1, 
  $$       
  we obtain a contradiction. Thus, we conclude $a_{16}=a_{32}=0$.
\end{proof}

\begin{theorem}
  \label{main_thm}
  No projective $16$-divisible binary linear code of length $131$ exists.  
\end{theorem}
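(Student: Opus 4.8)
The plan is to combine the three preceding lemmas. By Lemma~\ref{lemma_exclude_k_9} we may assume the dimension satisfies $k\ge 10$, and by Lemma~\ref{lemma_weight_16_or_32} any such code $C$ has $a_{16}=a_{32}=0$. Substituting these vanishing counts into the equations of Lemma~\ref{lemma_general_information} collapses everything to a single free parameter, namely $k$; in particular we obtain $a_3^*=5\cdot 2^{16-k}-311$. First I would exploit the integrality and non-negativity of $a_3^*$: since $C$ is projective we have $a_2^*=0$ but $a_3^*$ need only be a non-negative integer, and $5\cdot 2^{16-k}-311\ge 0$ forces $2^{16-k}\ge 63$, hence $16-k\ge 6$, i.e.\ $k\le 10$. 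Together with $k\ge 10$ this pins down $k=10$ exactly.

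It then remains to rule out the single case $k=10$, where the equations give $a_{48}=0$, $a_{64}=457$, $a_{80}=53$ (these are the $k=9$ values of Lemma~\ref{lemma_exclude_k_9} rescaled, one should just recompute them), and $a_3^*=5\cdot 2^6-311=9$. The plan here mirrors the endgame of Lemma~\ref{lemma_exclude_k_9}: one fixes a codeword $c_{80}$ of weight $80$ and passes to its residual code $C'$, which is projective $8$-divisible of effective length $51$ and hence, by Lemma~\ref{lemma_8_div_n_51}, is the unique two-weight code with $W(z)=1+204z^{24}+51z^{32}$. For any other weight-$80$ codeword $c$, its restriction to $C'$ has weight $0$, $24$, or $32$; weight $0$ is impossible (it would force $\operatorname{supp}(c)=\operatorname{supp}(c_{80})$, contradicting projectivity since two distinct codewords cannot share a support), and weight $32$ would give $\operatorname{wt}(c+c_{80})=96$, impossible. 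So every weight-$80$ codeword other than $c_{80}$ meets $\operatorname{supp}(c_{80})$ in exactly $24$ coordinates and lies entirely outside the $51$-set $\{1,\dots,131\}\setminus\operatorname{supp}(c_{80})$ on its complement — i.e.\ $\operatorname{wt}(c+c_{80})=80$ again, so the $52$ non-trivial weight-$80$ codewords are permuted in pairs by adding $c_{80}$. Since $52$ is even this no longer yields an immediate parity contradiction the way the odd $a_{80}=53$ did for $k=9$, so a genuinely extra argument is needed.

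The main obstacle is therefore exactly this $k=10$ case with its even count of weight-$80$ codewords: the cheap parity trick of Lemma~\ref{lemma_exclude_k_9} fails and one must dig deeper into the combinatorics of the weight-$80$ codewords. The natural route is a counting (double-counting) argument on the $53$ weight-$80$ supports. Using the residual structure, any two distinct weight-$80$ codewords meet in exactly $24$ positions, so the $53$ supports form a highly regular set system on $131$ points; one can count incidences of points with weight-$80$ supports, pairs of supports, and use the refined identities together with the known $a_{64}=457$ and $a_3^*=9$ to derive a contradiction — for instance by showing the implied configuration would force $a_3^*$ (the number of weight-$3$ dual codewords, equivalently the deviation of the column multiset of a generator matrix from being a cap) to be inconsistent with $9$, or by showing the $53$ weight-$80$ codewords together with $0$ would have to form a subcode whose dimension is too small to contain them, echoing the column-multiplicity bound used in Lemma~\ref{lemma_weight_16_or_32}. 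I expect the cleanest finish is the latter: bound the dimension $k'$ of the subcode generated by the weight-$80$ codewords via an intersection/column-multiplicity estimate and contradict $2^{k'}-1\ge 53$ against $k'\le k-\text{(something)}=6$ or less.
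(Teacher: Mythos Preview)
Your reduction to $k=10$ is correct and matches the paper exactly: $a_{16}=a_{32}=0$ from Lemma~\ref{lemma_weight_16_or_32}, then $a_3^*=5\cdot 2^{16-k}-311\ge 0$ forces $k\le 10$, and $k=9$ is handled by Lemma~\ref{lemma_exclude_k_9}.

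The $k=10$ endgame, however, contains genuine errors. First, the weight distribution: substituting $a_{16}=a_{32}=0$ and $k=10$ into Lemma~\ref{lemma_general_information} gives $a_{48}=12$, $a_{64}=899$, $a_{80}=112$ (not $0,457,53$; those are the $k=9$ numbers and they do not ``rescale''). Second, your weight computation for $c+c_{80}$ is wrong. If $c$ has weight $80$ and its restriction to the $51$ residual coordinates has weight $w$, then $|\operatorname{supp}(c)\cap\operatorname{supp}(c_{80})|=80-w$, hence $\operatorname{wt}(c+c_{80})=160-2(80-w)=2w$. Thus $w=24$ yields weight $48$ (not $80$) and $w=32$ yields weight $64$ (not $96$); neither is forbidden. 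Consequently there is no pairing of weight-$80$ codewords under addition of $c_{80}$, the ``any two meet in exactly $24$ positions'' claim is false (the possible intersection sizes are $40$, $48$, $56$), and the proposed double-counting and subcode/column-multiplicity finish rests entirely on these false premises.

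The paper's actual finish is much shorter and avoids all of this. For the residual code $C'$ of a weight-$80$ codeword (projective, $8$-divisible, length $51$, hence with the weight enumerator of Lemma~\ref{lemma_8_div_n_51}) one computes directly from equations~(\ref{mw0})--(\ref{mw3}) that $a_3^*(C')=17$. Since puncturing $C$ at $\operatorname{supp}(c_{80})$ dualises to shortening $C^*$ at those coordinates, every weight-$3$ codeword of $(C')^*$ is already a weight-$3$ codeword of $C^*$ supported in the $51$ surviving positions; hence $a_3^*(C)\ge a_3^*(C')=17>9=a_3^*(C)$, a contradiction.
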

\begin{proof}
  Assume that $C$ is a projective $16$-divisible $[131,k]$ code. From Lemma~\ref{lemma_weight_16_or_32} we conclude 
  $a_{16}=a_{32}=0$, so that Lemma~\ref{lemma_general_information} yields $a_3^*=5\cdot 2^{16-k}-311$. Note that for $k\ge 11$ 
  the non-negative integer $a_3^*$ would be negative. The case $k=9$ is excluded in Lemma~\ref{lemma_exclude_k_9}. In the 
  remaining case $k=10$ we have $a_3^*=9$ and $a_{80}=112$.
  
  Now consider the residual code $C'$ of $C$ with respect to a codeword $c$ of weight $80$. Plugging in the weight enumerator for 
  $C'$ from Lemma~\ref{lemma_8_div_n_51} in Equations~(\ref{mw0})-(\ref{mw3}) gives $a_3^*(C')=17$. Thus, we conclude $a_3^*(C)\ge 17$, 
  which is a contradiction.
\end{proof}

We remark that some parts of our argument can be replaced using the partition weight enumerator from Equation~(\ref{eq_macwilliams_partition}). 
If we consider the partition weight enumerator with respect to a codeword $c$ of weight $80$, then we have $r=2$, $p_1=51$, and $p_2=80$. The 
possible indices where $a_I$ might be positive are given by $(0,0)$, $(0,16)$, $(0,32)$, $(0,48)$, $(0,64)$, $(0,80)$, $(24,24)$, 
$(24,40)$, $(24,56)$, $(32,32)$, and $(32,48)$. Clearly, we have $a_{(0,0)}=1$ and $a_{(0,80)}=1$. By considering the sums of a codeword 
with $c$ we conclude $a_{(0,16)}=a_{(0,64)}$, $a_{(0,32)}=a_{(0,48)}$, $a_{(24,24)}=a_{(24,56)}$, and $a_{(32,32)}=a_{(32,48)}$. From 
Lemma~\ref{lemma_8_div_n_51} we conclude $a_{(32,32)}=a_{(32,48)}=51\cdot 2^{k-9}$, $a_{(24,24)}=a_{(24,56)}=t$, and 
$a_{(24,40)}=204\cdot 2^{k-8}-2t$, where $k$ is the dimension of the code and $t\in \mathbb{N}$ a free parameter. Plugging into 
Equation~(\ref{eq_macwilliams_partition}) this gives $a_{(0,16)}+a_{(0,32)}=2^{k-9}-1$ for the coefficients of $t_1^0t_2^0$ since $a_{(0,0)}^*=1$. 
Using this equation automatically gives $a_{(1,0)}^*=0$, $a_{(2,0)}^*=0$, and $a_{(3,0)}^*=17$. Since $a_{(0,2)}^*=0$ the coefficient 
of $t_2^2$ gives $6320-7344\cdot 2^{k-9}+1024t+ 2224a_{(0,16)} + 176a_{(0,32)}=0$. Thus, we have $a_{(0,16)}=7\cdot 2^{k-10}-3-\tfrac{t}{2}$ and 
$a_{(0,32)}=2-5\cdot 2^{k-10}+\tfrac{t}{2}$. The coefficient of $t_1^1t_2^2$ then gives $a_{(1,2)}^*=408-3t\cdot 2^{14-k}$. For $k=9$ the non-negativity conditions 
$a_{(0,16)},a_{(0,32)}\ge 0$ force $t=1$, so that $a_{(0,0)}=1$, $a_{(0,16)}=a_{(0,64)}=0$, $a_{(0,32)}=a_{(0,48)}=0$, $a_{(0,80)}=1$, 
$a_{(24,40)}=406$, $a_{(24,24)}=a_{(24,56)}=1$, and $a_{(32,32)}=a_{(32,48)}=51$.      
It can be checked that all coefficients on the right hand side of Equation~(\ref{eq_macwilliams_partition}) are non-negative. 
$a_{(0,32)}\ge 0$ implies $t\ge 5\cdot 2^{k-9}-4$, so that $a_{(1,2)}^*$ would be negative for $k\ge 12$. 

Theorem~\ref{main_thm} implies a few further results.
\begin{proposition}
  For $t\ge 0$ we have $A_2(8+5t,10;5)\le 3+2^8\cdot \frac{32^t-1}{31}$. 
\end{proposition}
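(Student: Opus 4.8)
The plan is to couple Theorem~\ref{main_thm} with the upper bound $A_q(n,2k;k)\le\leftllfloor\frac{(q^n-1)/(q-1)}{(q^k-1)/(q-1)}\rightrrfloor_{q^{k-1}}$ of \cite{kiermaier2020lengths}, taken with $q=2$ and $k=5$, so that $2^{k-1}=16$ and $(2^k-1)/(q-1)=31$. Since $d=2k$ here, the floor operator is the sharpened one: $\leftllfloor(2^{8+5t}-1)/31\rightrrfloor_{16}$ is the largest integer $m$ for which a \emph{projective} $16$-divisible binary linear code of effective length $(2^{8+5t}-1)-31m$ exists. Every such effective length is fixed modulo $31$, because $2^{8+5t}-1\equiv 2^8\cdot 32^t-1\equiv 2^8-1\equiv 7\pmod{31}$ (as $32\equiv 1\pmod{31}$).

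First I would identify the critical value. A short computation gives $31\bigl(4+2^8\cdot\tfrac{32^t-1}{31}\bigr)=2^{8+5t}-132$, so setting $m_0:=4+2^8\cdot\tfrac{32^t-1}{31}$ — an integer, again since $32\equiv1\pmod{31}$ — one obtains $(2^{8+5t}-1)-31m_0=131$. Hence for every $m\ge m_0$ the effective length $(2^{8+5t}-1)-31m$ equals $131-31(m-m_0)$, and while it is positive it is one of $131,100,69,38,7$. Thus the proposition reduces to a single assertion: none of $7,38,69,100,131$ is the effective length of a projective $16$-divisible binary linear code. Granting this, no $m\ge m_0$ is admissible, the floor is at most $m_0-1=3+2^8\cdot\tfrac{32^t-1}{31}$, and the claimed bound follows. (The value $t=0$ lies outside the range $2k\le n$ of the cited bound, but it is trivial anyway: $A_2(8,10;5)=A_2(8,10;3)=1\le 3$, as two $3$-dimensional subspaces of $\mathbb{F}_2^8$ have subspace distance at most $6<10$.)

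It then remains to clear the five lengths. Length $131$ is exactly Theorem~\ref{main_thm}. Length $7$ is immediate, since a $16$-divisible binary code of positive effective length has minimum weight at least $16>7$. For length $38$ I would argue as in Lemma~\ref{lemma_weight_16_or_32}: a codeword of weight $32$ would yield a projective $8$-divisible residual code of effective length $6\in\mathcal{E}_2(3)$, which is impossible, so a length-$38$ code would be a projective one-weight code, hence (being a replicated simplex code) the simplex code of length $31\neq 38$. The same residual passage reduces the exclusion of lengths $69$ and $100$ to the memberships $\{5,21,37,53\}\subseteq\mathcal{E}_2(3)$ and $\{4,20,36,52,68,84\}\subseteq\mathcal{E}_2(3)$, respectively; the shortest of these follow from a further one-weight argument and the rest from the completed classification of $\mathcal{E}_2(3)$ in \cite{honold2019lengths}. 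Equivalently, one may simply quote $\{7,38,69,100\}\subseteq\mathcal{E}_2(4)$ from \cite{kiermaier2020lengths,honold2018partial}.

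The real work is the last paragraph, and within it the length $100$ is the genuine obstacle: unlike $7$, $38$ and $69$ it is not killed by the elementary minimum-weight and one-weight observations, so it requires either the prior length classifications or a fresh MacWilliams-identity computation in the spirit of Lemma~\ref{lemma_general_information}. Everything else — the reduction to $\{7,38,69,100,131\}$, the modular arithmetic, the case $t=0$, and the invocation of Theorem~\ref{main_thm} — is routine.
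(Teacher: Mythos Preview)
Your argument is correct, but it takes a detour the paper avoids. The paper argues directly from the geometry: assume a partial $5$-spread of size $m_0=4+2^8\cdot\tfrac{32^t-1}{31}$ exists in $\mathbb{F}_2^{8+5t}$; the $2^{8+5t}-1-31m_0=131$ uncovered points then form, by \cite[Lemma~16]{honold2018partial}, a projective $16$-divisible binary linear code of length $131$, contradicting Theorem~\ref{main_thm}. Because one may always pass to a sub-spread of size exactly $m_0$, only the single length $131$ has to be excluded, and no special treatment of $t=0$ is needed.

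Your route through the floor bound $\leftllfloor(2^{8+5t}-1)/31\rightrrfloor_{16}$ instead forces you to rule out all five candidate lengths $7,38,69,100,131$. That works --- the four smaller ones are indeed in $\mathcal{E}_2(4)$, e.g.\ by \cite[Theorem~12]{honold2018partial}, since each is at most $4\cdot 2^5=128$ and none equals $31a+32b$ with $a,b\ge 0$ --- but it is extra bookkeeping that the direct hole-set argument sidesteps. What your approach buys is that it makes transparent how the proposition sits inside the general $\llfloor\cdot\rrfloor_{q^{k-1}}$ framework; what the paper's approach buys is brevity and the avoidance of any appeal to the classification of shorter lengths.
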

\begin{proof}
  Assume that $\mathcal{C}$ is a set of $4+2^8\cdot \frac{32^t-1}{31}$ $5$-dimensional subspaces in $\mathbb{F}_2^{8+5t}$ with pairwise trivial 
  intersection. Then, the number of vectors in $\mathbb{F}_2^{8+5t}$ that are disjoint to the vectors of the elements of $\mathcal{C}$ is given 
  by $\left(2^{8+5t}-1\right)-31\cdot \left(4+2^8\cdot \frac{32^t-1}{31}\right)=131$. Thus, by \cite[Lemma 16]{honold2018partial}, there exists 
  a projective $2^{5-1}$-divisible binary linear code of length $n=131$, which contradicts Theorem~\ref{main_thm}.
\end{proof}
The recursive upper bound for constant-dimension codes mentioned in the introduction implies:
\begin{corollary}
  We have $A_2(14,10;6)\le 67\,349$, $A_2(15,10;7)\le 17\,727\,975$, 
  and $A_2(19,10,6)\le 70\,329\,353$.
\end{corollary}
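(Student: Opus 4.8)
The plan is to derive all three bounds from the recursive upper bound for constant-dimension codes recalled in the introduction, feeding it with the partial-spread bounds supplied by the preceding proposition. For $q=2$ that inequality reads
\[
  A_2(n,10;k)\le \leftllfloor\frac{\left(2^n-1\right)\cdot A_2(n-1,10;k-1)}{2^k-1}\rightrrfloor_{2^{k-1}},
\]
so that an upper bound on $A_2(n-1,10;k-1)$ propagates to one on $A_2(n,10;k)$; the divisible floor $\leftllfloor\cdot\rightrrfloor_{2^{k-1}}$ additionally exploits that the set of holes left uncovered by an optimal code is the complement of a $2^{k-1}$-divisible code, and is therefore typically strictly smaller than the plain integer floor.

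First I would assemble the base cases. Instantiating the preceding proposition at $t=1$ and $t=2$ gives $A_2(13,10;5)\le 259$ and $A_2(18,10;5)\le 8451$. These are precisely the points at which the underlying counting argument would otherwise leave $131$ holes, so they are exactly where Theorem~\ref{main_thm} enters; they are the only inputs that carry the length-$131$ non-existence into the corollary.

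Next I would run the recursion along two chains. One application of the displayed inequality to $A_2(13,10;5)\le 259$ produces a bound on $A_2(14,10;6)$ (divisibility exponent $2^5$), and a second application to that value produces $A_2(15,10;7)$ (now with exponent $2^6$); a single application to $A_2(18,10;5)\le 8451$ produces $A_2(19,10;6)$. In each step one first forms the rational quantity $(2^n-1)\,A_2(n-1,10;k-1)/(2^k-1)$ and then rounds it down to the corresponding divisible floor.

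The main obstacle is evaluating these divisible floors rather than the ordinary integer floors: for each step one must locate the largest admissible $t$ below $\lfloor a/b\rfloor$ for which the residual length $a-tb$ is genuinely realizable as a $2^{k-1}$-divisible binary code. This is where the length spectrum of divisible codes determined in the references cited in the introduction is required, and it is responsible for the small downward corrections that separate the stated values from the naive floors. Once the realizable lengths of $2^5$- and $2^6$-divisible codes are in hand, the three numbers follow by direct computation.
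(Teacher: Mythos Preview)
Your approach is correct and coincides with the paper's own justification, which is simply the one-line remark that the recursive upper bound from the introduction, seeded with the partial-spread bounds of the preceding proposition, yields the corollary. One small slip in wording: the multiset of uncovered points is itself (the column multiset of) a $2^{k-1}$-divisible code, not the \emph{complement} of one; with that corrected, your outline is exactly the intended argument.
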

As an open problem we mention that the non-existence of a projective $16$-divisible binary linear code of length $n=130$ would imply $A_2(15,12;6)\le 514$.

\begin{lemma}
  \label{lemma_aux_1}
  For $k\ge 1$, $r\ge 3$, and $j\le 2r-1$ no projective $2^r$-divisible $\left[3+j\cdot 2^{r},k\right]$ code exists.  
\end{lemma}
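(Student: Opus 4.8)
The goal is to show that no projective $2^r$-divisible $[n,k]$ code with $n = 3 + j\cdot 2^r$ exists for $3\le r$ and $1\le j\le 2r-1$. I would proceed by induction on $r$, reducing to the facts already established in $\mathcal{E}_2(3)$ (the excerpt cites $\{3,19,35,59\}\subseteq\mathcal{E}_2(3)$) and to Theorem~\ref{main_thm} itself, since $131 = 3 + 2\cdot 2^6$ and $59 = 3 + 7\cdot 2^3$ both fit the pattern $3 + j\cdot 2^r$ with $j\le 2r-1$. The base case $r=3$ would need $3 + j\cdot 8$ for $j = 1,\dots,5$, i.e.\ lengths $11,19,27,35,43$, all of which should be inside $\mathcal{E}_2(3)$ according to \cite{honold2019lengths}. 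For $r=4$ one has $3+16j$ for $j=1,\dots,7$, i.e.\ $19,35,51,67,83,99,115,131$ — wait, that's $j=1,\dots,7$ giving $19,35,51,\dots,115$ plus $j=8$ is excluded; here length $131$ corresponds to $j = 8 = 2\cdot 4$, which is \emph{not} $\le 2r-1 = 7$, so Theorem~\ref{main_thm} is actually a \emph{stronger} statement than this lemma provides at $r=4$. So the lemma's role seems to be to package a uniform family of non-existence results, most of which follow from the $\mathcal{E}_2(3)$ and residual arguments, with Theorem~\ref{main_thm} as an additional data point beyond it.

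The inductive step would run as follows. Suppose $C$ is a projective $2^r$-divisible $[3 + j\cdot 2^r, k]$ code with $r\ge 4$ and $1\le j\le 2r-1$; I want a contradiction. Every codeword $c$ of weight $w$ yields a residual code that is projective and $2^{r-1}$-divisible of effective length $3 + j\cdot 2^r - w$. If I can show some codeword has weight $w$ such that $3 + j\cdot 2^r - w = 3 + j'\cdot 2^{r-1}$ with $1\le j'\le 2(r-1)-1 = 2r-3$, the induction hypothesis finishes it. Writing $w = (2j - j')\cdot 2^{r-1}$, and since a nonzero codeword in a $2^r$-divisible code has weight a positive multiple of $2^r = 2\cdot 2^{r-1}$, we need $2j - j'$ even, i.e.\ $j'$ even, and $0 < 2j - j' \le$ (something bounded by the maximum codeword weight). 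The constraint $1\le j'\le 2r-3$ with $j'$ even gives $j'\in\{2,4,\dots\}$. So the plan is: first, use the MacWilliams-type equations \eqref{mw0}--\eqref{mw3} (as in Lemma~\ref{lemma_general_information}) to pin down which weights actually occur and in what multiplicities; second, argue that for the given range of $j$ there must exist a codeword of a weight $w$ falling into the residual-length window handled by the induction hypothesis or by a known element of $\mathcal{E}_2(3)$; third, invoke that hypothesis.

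\textbf{Main obstacle.} The delicate part is ensuring that a codeword of the \emph{right} weight always exists — the residual argument only gives information about weights that are actually attained, and in principle a $2^r$-divisible code could avoid precisely the bad weights. This is exactly the kind of obstruction that forced the detailed case analysis in Lemmas~\ref{lemma_general_information}--\ref{lemma_weight_16_or_32} and the partition-weight-enumerator computation in the proof of Theorem~\ref{main_thm}: one has to show, via the first few MacWilliams identities and non-negativity of the $a_i$ and $a_i^*$, that the "dangerous" weights cannot all be excluded. I expect the proof will either (a) bound the maximum possible codeword weight by something like $(2j-1)\cdot 2^{r-1}$ or so (using that the residual of a heaviest codeword must have admissible effective length, recursively), thereby squeezing the weight spectrum into a narrow set, and then (b) observe that for $j$ in the stated range the spectrum is forced to include a weight whose residual length lands in $\mathcal{E}_2(r-1)$ — quite possibly the weight whose residual has effective length $3$, $19$, or $35$ — giving the contradiction. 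Getting the range $j\le 2r-1$ to come out exactly right is where the bookkeeping will be fiddly, but the structural idea is just iterated residuation against the known small elements of the $\mathcal{E}_2$ sequence together with Theorem~\ref{main_thm}.
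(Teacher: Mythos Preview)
The paper's proof is entirely different from your plan and much shorter: it directly invokes \cite[Theorem~12]{honold2018partial}, which says that the length $n$ of a projective $2^r$-divisible binary code either exceeds $r\cdot 2^{r+1}$ or has the form $n = a(2^{r+1}-1) + b\cdot 2^{r+1}$ with $a,b\ge 0$. A two-line arithmetic check then shows that $3 + j\cdot 2^r$ with $j\le 2r-1$ and $r\ge 3$ fits neither alternative (the second forces $a \equiv -3 \pmod{2^r}$, hence $a\ge 2^r-3$, which already makes $a(2^{r+1}-1)$ too big). No induction, no residuals, no MacWilliams identities. The residual-induction scheme you sketch is exactly what the paper deploys in the \emph{next} result, the Proposition following this lemma, where Lemma~\ref{lemma_aux_1} is the tool that kills the large-weight residuals and Theorem~\ref{main_thm} supplies the base case.

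The obstacle you flag in your own plan is real, and your sketch does not overcome it. Concretely: the residual of a weight-$m\cdot 2^r$ codeword has length $3 + 2(j-m)\cdot 2^{r-1}$, so the induction hypothesis at level $r-1$ forbids only $m \in [\,j-r+2,\,j\,]$, leaving weights with $m\le j-r+1$ unconstrained. Taking the combination $(\ref{mw1}) - (j-r+1)2^r\cdot(\ref{mw0})$ produces a positive right-hand side, and hence a contradiction, precisely when $j\le 2r-2$; at the boundary $j = 2r-1$ that right-hand side equals $2^{k-1}(3 - 2^r) + r\cdot 2^r$, which is negative once $k$ is moderately large, so the first two MacWilliams identities alone do not close the case. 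You would need $(\ref{mw2})$--$(\ref{mw3})$ and additional work to finish, whereas the paper's citation-based argument sidesteps the issue entirely.
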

\begin{proof}
  In \cite[Theorem 12]{honold2018partial} it was proven that the length $n$ of a projective $2^r$-divisible binary linear code either 
  satisfies $n>r2^{r+1}$ or can be written as $n=a\left(2^{r+1}-1\right)+b2^{r+1}$ for some non-negative integers $a$ and $b$. Using $r\ge 3$, we 
  note that $3+j\cdot 2^{r}\le 3+(2r-1)\cdot 2^{r}= 3-2^r+r2^{r+1}<r2^{r+1}$. If $a\left(2^{r+1}-1\right)+b2^{r+1}=3+j\cdot 2^{r}$, then $3+a$ is 
  divisible by $2^r$, so that $a\ge 2^r-3$. However, for $r\ge 3$ we have $a\left(2^{r+1}-1\right)+b2^{r+1}\ge \left(2^r-3\right)\cdot\left(2^{r+1}-1\right)
  >3+(2r-1)\cdot 2^r\ge 3+j\cdot 2^{r}$ -- contradiction.
\end{proof}

\begin{proposition}
  For $k\ge 1$, $r\ge 4$, and $j\le 2r$ no projective $2^r$-divisible $\left[3+j\cdot 2^{r},k\right]$ code exists.  
\end{proposition}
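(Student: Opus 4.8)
The plan is to reduce to the single case $j=2r$ and then run an induction on $r\ge 4$ whose base case is Theorem~\ref{main_thm}. By Lemma~\ref{lemma_aux_1}, applicable since $r\ge 4\ge 3$, no projective $2^{r}$-divisible $[3+j\cdot 2^{r},k]$ code exists for any $j\le 2r-1$, so only $j=2r$ remains, i.e.\ length $n=3+2r\cdot 2^{r}=3+r\cdot 2^{r+1}$. For $r=4$ this length equals $131$ and the assertion is exactly Theorem~\ref{main_thm}.

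For the inductive step I would take $r\ge 5$ and assume, for a contradiction, that $C$ is a projective $2^{r}$-divisible $[3+r\cdot 2^{r+1},k]$ code. First I would record that for any codeword $c\in C$ the residual code $C'$ of $C$ with respect to $c$ is projective and $2^{r-1}$-divisible of effective length exactly $n-\operatorname{wt}(c)$: divisibility is the residual argument recalled in Section~\ref{sec_preliminaries}; every coordinate outside $\operatorname{supp}(c)$ is nonzero in some codeword of $C$ (hence of $C'$) because $a_1^*(C)=0$, so $C'$ has no zero coordinate; and two coordinates taking the same value throughout $C'$ would take the same value throughout $C$, contradicting $a_2^*(C)=0$. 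Writing a nonzero weight of $C$ as $m\cdot 2^{r}$ with $1\le m\le 2r$ (the upper bound from $\operatorname{wt}(c)\le n$), the residual $C'$ is then a projective $2^{r-1}$-divisible code of length $3+(2r-m)\cdot 2^{r}=3+(4r-2m)\cdot 2^{r-1}$, and Lemma~\ref{lemma_aux_1} with $r-1$ in place of $r$ (legitimate since $r-1\ge 4\ge 3$) rules this out whenever $4r-2m\le 2(r-1)-1$, i.e.\ whenever $m\ge r+2$. Hence the nonzero weights of $C$ lie in $\{2^{r},2\cdot 2^{r},\dots,(r+1)\cdot 2^{r}\}$.

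Next I would force the existence of a codeword of weight $(r+1)\cdot 2^{r}$. If there were none, all nonzero weights would be at most $r\cdot 2^{r}$, and combining (\ref{mw0}) with (\ref{mw1}) would give $2^{k-1}n=\sum_i i\,a_i\le r\cdot 2^{r}\,(2^{k}-1)<r\cdot 2^{r+1}\cdot 2^{k-1}$, so $n<r\cdot 2^{r+1}$, contradicting $n=3+r\cdot 2^{r+1}$. So fix a codeword $c$ with $\operatorname{wt}(c)=(r+1)\cdot 2^{r}$; its residual $C'$ is a projective $2^{r-1}$-divisible code of effective length $n-(r+1)\cdot 2^{r}=3+(r-1)\cdot 2^{r}=3+2(r-1)\cdot 2^{r-1}$, which is precisely the instance $j=2(r-1)$ of the proposition for the exponent $r-1\ge 4$. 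By the induction hypothesis this code does not exist, a contradiction, which completes the induction.

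I do not expect any genuine obstacle here: once the case $j=2r$ is isolated, the argument is mere bookkeeping with Lemma~\ref{lemma_aux_1} and the two linear relations (\ref{mw0})--(\ref{mw1}), and all the real difficulty is already absorbed into Theorem~\ref{main_thm}, which supplies the base case $r=4$. The only point that warrants an explicit line is that passing to a residual preserves projectivity and the full effective length, which follows from $a_1^*=a_2^*=0$ as sketched above; everything else is elementary.
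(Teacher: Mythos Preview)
Your argument is correct and matches the paper's: both reduce to $j=2r$ via Lemma~\ref{lemma_aux_1}, take Theorem~\ref{main_thm} as the base case $r=4$, kill weights $\ge(r+2)2^{r}$ by applying Lemma~\ref{lemma_aux_1} to residuals, and then combine (\ref{mw0}) and (\ref{mw1}) with the induction hypothesis at weight $(r+1)2^{r}$ to reach a contradiction. The only difference is packaging---the paper writes the MacWilliams step as the single inequality $\sum_{i>0}(i-r)2^{r}a_{i2^{r}}=3\cdot 2^{k-1}+r\cdot 2^{r}>0$ and uses induction to force $a_{(r+1)2^{r}}=0$, whereas you first deduce that a codeword of weight $(r+1)2^{r}$ exists and then contradict the induction hypothesis with its residual.
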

\begin{proof}
  Due to Lemma~\ref{lemma_aux_1} it suffices to consider $j=2r$. The case $r=4$ is given by Theorem~\ref{main_thm}. For $r>4$ we proof the statement 
  by induction on $r$. Assuming the existence of such a code, Equation~(\ref{mw1}) minus $r2^r$ times Equation~(\ref{mw0}) yields
  \begin{equation}
    \label{ie_special_2}
    \sum_{i>0} (i-r)2^r \cdot a_{i2^r}=3\cdot 2^{k-1}+r\cdot 2^r> 0. 
  \end{equation}  
  The residual code of a codeword of weight $i2^r$ is projective, $2^{r-1}$-divisible, and has length $3+(2r-i)\cdot 2^r$. If $i\ge r+2$, then 
  we can apply Lemma~\ref{lemma_aux_1} to deduce $a_{i2^r}=0$. For $i=r+1$ the induction hypothesis gives $a_{i2^r}=0$. Since $(i-r)2^r \cdot a_{i2^r}\le 0$ 
  for $i\le r$ the left hand side of Inequality~(\ref{ie_special_2}) is non-positive -- contradiction. 
\end{proof}



\end{document}